\pgfplotsset{compat=1.15}
\patchcmd{\maketitle}{\@copyrightspace}{}{}{}
\newcommand{\removelatexerror}{\let\@latex@error\@gobble}
\colorlet{punct}{red!60!black}
\definecolor{delim}{RGB}{20,105,176}
\definecolor{darkblue}{HTML}{43A2CA}
\definecolor{darkgreen}{HTML}{A8DDB5}
\lstdefinelanguage{json}{
    basicstyle=\normalfont\ttfamily\scriptsize,
    showstringspaces=false,
    breaklines=true,
    literate=
      {:}{{{\color{punct}{:}}}}{1}
      {,}{{{\color{punct}{,}}}}{1}
      {\{}{{{\color{delim}{\{}}}}{1}
      {\}}{{{\color{delim}{\}}}}}{1}
      {[}{{{\color{delim}{[}}}}{1}  
      {]}{{{\color{delim}{]}}}}{1}, 
}
\begin{document}

\title{Fast Discovery of Nested Functional and Inclusion Dependencies on JSON Data}

\author{Michael J. Mior}
\email{mmior@cs.rit.edu}
\affiliation{
  \institution{Rochester Institute of Technology}
  \streetaddress{102 Lomb Memorial Drive}
  \city{Rochester}
  \state{NY}
  \country{USA}
  \postcode{14623-5608}
}

\begin{abstract}
Functional and inclusion dependencies are the most widely used classes of data dependencies in data profiling due to their ability to identify relationships in data such as primary and foreign keys.
These relationships are equally important when dealing with nested data formats such as JSON.
However, the definition of functional and inclusion dependencies makes use of a flat, unnested relational model which misses many useful types of dependencies on data which involve nested data models.

In this work, we identify types of dependencies which are not captured by traditional functional and inclusion dependencies but which nevertheless capture meaningful relationships among nested data.
We also demonstrate how algorithms for mining these traditional dependencies can be adapted to also mine nested dependencies.
The first strategy simply flattens the input data and feeds into unmodified existing algorithms.
We present a second strategy which instead adapts the algorithm to efficiently process JSON data as input which in some cases leads to a reduction in runtime by multiple orders of magnitude on real-world datasets.
We further show how these algorithms can be adapted to produce useful results in the presence of a percentage of incomplete or invalid data.
\end{abstract}

\maketitle

\section{Introduction}

Primary and foreign key relationships provide information on datasets that is useful for data scientists when analyzing an unknown set of data.
However, in many cases this metadata is unavailable either because it was never defined by the author of the data or the system being used does not provide support for these definitions.
As an alternative, these keys can be inferred from the data using dependency mining.
Primary keys are often represented by \emph{functional dependencies} while foreign keys are represented by \emph{inclusion dependencies}.
Both of these classes of dependencies define relationships between two sets of attributes.
For example, a functional dependency may indicate that any two tuples with the same value of \texttt{employeeID} must also have the same value for \texttt{firstName} and \texttt{lastName}, ensuring that each ID only associates a single name with an employee.
Inclusion dependencies instead indicate that all values for a set of attributes also exist as values within another set of attributes.
For example, if all values of the \texttt{managerID} attribute also appear as values of the \texttt{employeeID} attribute, we can infer that \texttt{managerID} is likely a foreign key to \texttt{employeeID}.

Unfortunately, these dependencies (and by extension, algorithms for mining them) are not designed to express relationships on nested data.
However, such relationships can prove useful.
Consider an example of Amazon product data in Figure~\ref{fig:amz_data} slightly modified from real data collected from Amazon~\cite{He2016}.
We note that within each document, when a value is contained in the \texttt{buy\_after\_viewing} array, it implies that it must also be contained in the \texttt{also\_viewed} array.
Similarly, the \texttt{also\_viewed} array nested under the \texttt{related} key contains a list of Amazon Standard Identification Numbers (ASINs).
These ASIN values refer to the \texttt{asin} attribute of other documents in the collection.
This allows cross-referencing detailed information on other products viewed by users who have purchased a given product.
Mining these relationships provides significant value in describing the data stored in collections of JSON documents.

\begin{figure*}
    \centering
\begin{lstlisting}[frame=single,language=json]
{
  "asin": "B007IJKOMK",
  "salesRank": {"Music": 513528},
  "related": { "also_viewed": ["B00284G31G", "B001HADE96"], "buy_after_viewing": ["B001HADE96"]},
  "categories": [ ["CDs & Vinyl", "Classical"], ["Musical Instruments", "Instrument Acc."] ]
}
\end{lstlisting}
    \caption{Sample Amazon product data}\label{fig:amz_data}
\end{figure*}

While this is not unique to JSON data, we also identify that is common for data to be incorrect or incomplete.
Such ``dirty'' data can result in semantically meaningful and correct dependencies to be considered invalid.
That is, if our approach to mining validates dependencies against all documents, a single erroneous document in a large collection would render a dependency invalid.
Therefore, in this work, we also consider approaches to dependency mining which can find ``approximate'' dependencies which hold on the majority of a dataset, while permitting a configurable threshold of violations to accommodate data which may be incorrect.
This significantly increases the utility of our algorithms since it is much easier to allow small amounts of erroneous data than it is to identify and correct the errors.

In this work, we extend the notion of functional and inclusion dependencies to allow dependencies between attributes in collections of nested JSON documents.
Specifically, we make the following contributions:

\begin{enumerate}
    \item A definition of nested functional and inclusion dependencies suitable that extends traditional inclusion dependencies for nested JSON data
    \item Efficient algorithms for mining these dependencies on a collection of JSON documents
\item Approximation techniques to allow for discovery of dependencies which are valid on the majority of a dataset, enabling our algorithms to work on dirty data
    \item An evaluation of the effectiveness of this approach in discovering dependencies in real-world datasets
\end{enumerate}

\section{Background}

Our definitions of nested dependencies are extensions of the traditional notions of functional and inclusion dependencies.
We first provide a summary of these traditional definitions.
We then describe JSONPath that we will use to represent the nested attributes on either side of the dependencies.

\subsection{Functional Dependencies}\label{subsec:fds}

Functional dependencies are used to identify a relationship between a set of attributes in a single relation.
For example, a functional dependency $\texttt{employeeID}\rightarrow\texttt{firstName},$\allowbreak$\texttt{lastName}$ indicates that any two tuples which have the same value for \texttt{employeeID} must also have the same values for \texttt{firstName} and \texttt{lastName}.
Commonly, such a functional dependency would be described as \texttt{employeeID} determines \texttt{firstName} and \texttt{lastName}.
In general, functional dependencies are of the form $A\rightarrow B$ where $A$ and $B$ are sets of attributes on some relation.
While functional dependencies do not directly imply uniqueness of any attributes values, they are often used to identify primary keys for a relation.
For a relation $R\left(A\right)$, the functional dependency $K\rightarrow A\setminus K$ indicates that the attribute set $K$ is a primary key for $R$.
Due to this relationship between functional dependencies and primary keys, functional dependencies are commonly used in the context of data profiling for identifying possible primary keys for a relation.

Functional dependencies (FDs) are also useful for identifying and enforcing constraints.
For example, the FD $\texttt{zipCode}\rightarrow\texttt{city}, \texttt{state}$ indicates that the same zip code cannot be assigned to multiple different cities.
Identifying these types of dependencies can help identify and repair errors in dirty data~\cite{Bohannon2005,Kolahi2009}.
For example, suppose we identify that if the majority of records with zip code 14623 have the city and state Rochester, NY.
If a record exists with zip code 14623 with a city and state of Rochester, MN, we can infer that the recorded state of MN is likely incorrect.

When working with JSON documents, functional dependencies are equally useful for identifying such constraints and primary key relationships.
However, as we identify in Section~\ref{sec:nfd}, there are several cases where the traditional definition fails to capture relationships in nested JSON data which is the focus of this work.

\subsection{Inclusion Dependencies}\label{subsec:inds}

Inclusion dependencies have long been used as a manner of identifying possible foreign key relationships in relational databases.
An inclusion dependency $R\left(A\right)\subseteq S\left(B\right)$ indicates that for every tuple $t\in R$, there exists a corresponding tuple in $u\in S$ where these two tuples agree on the sets of attributes $A$ and $B$ (i.e., $t\left[A\right]=u\left[B\right]$).
This property is necessary and sufficient for identifying a valid foreign key relationship and is particularly useful when foreign keys are not maintained for a dataset.

NoSQL databases making use of the JSON data format are among systems that commonly have no explicit representation of foreign keys.
There is rarely an explicit representation of the database schema nor any notion of referential integrity as is the case with all popular document databases such as MongoDB~\cite{MongoDB}, DynamoDB~\cite{DeCandia2007}, and Couchbase~\cite{Couchbase}.
This complicates the task of data exploration.
When processing an unknown dataset, the lack of knowledge of foreign key relationships limits the types of analyses that can be performed.
For example, the analyst may not know what possible joins will produce useful results.

One approach to facilitate this exploration is inclusion dependency mining.
As we discuss further in Section~\ref{sec:related_work}, existing approaches to inclusion dependency mining are not designed to discover inclusion dependencies on nested JSON data.
When all documents consist of JSON objects with a single level of nesting, existing approaches to inclusion dependency mining can be easily adapted. We simply treat all JSON documents as tuples in a relation defined by the possible attributes present across all documents.

However, as mentioned previously, this fails to capture inclusion dependencies that arise within nested objects or arrays.
Consider again the example from Figure~\ref{fig:amz_data}.
One simple approach to adapting existing algorithms would be to ``unroll'' such nested data into relational format as in Table~\ref{tbl:amz_unrolled}.
However, we find that unrolling a collection of $\sim$100MB of JSON documents similar to the one presented in Figure~\ref{fig:amz_data} results in records totalling over 1.7GB\@!
As we show in Section~\ref{sec:eval}, this technique, which we term \emph{static unrolling}, results in significant overhead since it requires the mining algorithm to process a large number of tuples where much of the data is redundant.
Later in Section~\ref{sec:nind}, we describe an adaption of inclusion dependencies to handle nested structures.

\begin{table*}
\centering
{\footnotesize
\rowcolors{2}{white}{gray!25}
\begin{tabular}{lllllll}
asin       & \begin{tabular}{@{}l@{}}salesRank \\ KEY\end{tabular} & \begin{tabular}{@{}l@{}}salesRank \\ VALUE\end{tabular} & \begin{tabular}{@{}l@{}}related \\ also\_viewed\end{tabular} & \begin{tabular}{@{}l@{}}related \\ buy\_after\_viewing\end{tabular} & categories\_0       & categories\_1          \\
\hline{}B007IJKOMK & Music          & 513528           & B00284G31G & B001HADE96 & CDs \& Vinyl        & Classical              \\
B007IJKOMK & Music          & 513528           & B001HADE96 & B001HADE96          & CDs \& Vinyl        & Classical              \\
B007IJKOMK & Music          & 513528           & B00284G31G & B001HADE96 & Musical Instruments & Instrument Acc. \\
B007IJKOMK & Music          & 513528           & B001HADE96 & B001HADE96          & Musical Instruments & Instrument Acc.
\end{tabular}}
    \caption{Unrolled Amazon product data}\label{tbl:amz_unrolled}
\end{table*}

\subsection{JSONPath}

As discussed previously, our focus in this work is on JSON data.
JavaScript Object Notation (JSON) is a common format used to represent documents in NoSQL databases.
JSON consists of strings, numbers, booleans, arrays, and objects.
Both arrays and objects can be nested to an arbitrary depth.
A simplified version of the grammar for JSON is given in Figure~\ref{fig:json_grammar} with the document in Figure~\ref{fig:amz_data} being an example instance.

\begin{figure}
\begin{verbatim}
pair: STRING ':' value;
obj: '{' pair (',' pair)* '}' | '{}';
array: '[' value (',' value)* ']' | '[]';
value: STRING | NUMBER | obj | array
      | 'true' | 'false' | 'null';
\end{verbatim}
\caption{Simplified JSON grammar~\protect\cite{Parr2013}}~\label{fig:json_grammar}
\end{figure}

To denote nested fields within each document, we make use of  JSONPath~\cite{Goessner2007}.
JSONPath is a language similar to XPath~\cite{Clark1999} for XML that identifies specific nodes within the nested document tree.
When defining paths for use in our nested dependencies, we consider the root of each path (denoted \texttt{\$} in JSONPath) to be a document in our collection.
Paths are simply represented using dot notation while we can consider all elements of an array using \texttt{[*]}.
For example, the path $\texttt{\$.related.also\_viewed[*]}$ refers to all elements of the array under the key \texttt{related.also\_viewed}.

\section{Nested Inclusion Dependencies}\label{sec:nind}

As demonstrated in the previous section, the standard definition of inclusion dependencies for the relational model fails to capture the entirety of inclusion relationships in JSON data.
Specifically, treating arrays as atomic misses many relationships within JSON data.

For example, consider the traditional notion of an inclusion dependency augmented by using attributes produced from evaluating the following two JSONPath expressions: \texttt{\$.related.also\_viewed[*]} and  \texttt{\$.asin} on the data in Figure~\ref{fig:amz_data}.
Since evaluating the path \texttt{\$.related.also\_viewed[*]} produces an array and \texttt{\$.asin} produces an atomic value, this inclusion dependency would never hold.
(That is, an array would never appear as an element in a set of atomic values.)
However, this misses an important opportunity to express a meaningful property of the data.

As discussed previously, the array value which is located at the path \texttt{\$.related.also\_viewed} \emph{and} the string value at the path \texttt{\$.asin} both contain values from the same domain.
The path \texttt{\$.asin} produces the ASINs of all products while we can think of the path \texttt{\$.related.also\_viewed} as a foreign key connecting a produt to other products commonly purchased together.

To capture these types of relationships, we extend the traditional notion of inclusion dependencies to \emph{nested inclusion dependencies} (NINDs).
Instead of using attributes from a relation on either side of the dependency, we use a sequence of JSONPath expressions.

In this work, we consider nested inclusion dependencies within a single collection of documents, so there is no need to indicate a document collection on either side of the dependency.
Since it is common for JSON objects to contain redundant and duplicated data, nested inclusion dependencies within a single collection of documents are still useful.
However, our approach can be easily extended to incorporate dependencies involving multiple collections of documents (i.e. with attributes from different collections on each side).

\subsection{Definition}

Valid JSONPath expressions for use in nested inclusion dependencies result in a value which is either atomic or an array.
In the case where a path expression produces an atomic value, the semantics of nested inclusion dependencies are identical to traditional inclusion dependencies.
When an expression produces an array, we have four cases to consider: either an expression on the left results in an array, an expression on the right, neither, or both.

For simplicity, we first consider a nested inclusion dependency with only a single path on each side when describing the semantics which we denote $\mathcal{D}: P_1\subseteq P_2$ where $\mathcal{D}$ is a collection of documents and $P_1$ and $P_2$ are JSONPath expressions.
For a document $d\in\mathcal{D}$, we use the notation $d\left[P\right]$ to refer to the value stored at path $P$.
Similarly, we will use $\mathcal{D}\left[P\right]$ to refer to the set of all values produced when evaluating $d\left[P\right]$ for all $d$ in $\mathcal{D}$.

We say that a value $d\left[P_1\right]$ is \emph{fully included} in $\mathcal{D}\left[P_2\right]$, which we denote $d\left[P_1\right]\Subset\mathcal{D}\left[P_2\right]$ if any of the following conditions are met:
\begin{enumerate}
  \item $v=d\left[P_1\right]$ is an atomic value and $v\in\mathcal{D}\left[P_2\right]$,
  \item $v=d\left[P_1\right]$ is an atomic value and there is a document $d'\in\mathcal{D}$ where $d'\left[P_2\right]$ is an array and $v\in d'\left[P_2\right]$, or
  \item $a=d\left[P_1\right]$ is an array and all values $v\in a$ in the array exist either as atomic values or within arrays in $\mathcal{D}\left[P_2\right]$ according to 1 and 2 above.
\end{enumerate}

A nested inclusion dependency $\mathcal{D}:P_1\subseteq P_2$ then holds iff $d\left[P_1\right]\Subset\mathcal{D}\left[P_2\right]\forall d\in\mathcal{D}$.
That is, all values in the set of values at the path $P_1$ across all documents in $\mathcal{D}$ are fully included in the set of values at $P_2$ in the collection $\mathcal{D}$.

\subsection{Examples}

We now present several examples of nested inclusion dependencies including both those which mirror standard inclusion dependencies as well as those with nested values at the path on the left, right, or both sides.
For simplicity, we will consider only three documents which are shown in Figure~\ref{fig:example_docs}.

\begin{figure*}
    \centering
    \begin{minipage}[t]{0.24\textwidth}
        \begin{lstlisting}[language=json]
        {
          "id": 2",
          "parent": 5,
          "rel": [3,5],
          "main": "A",
          "types": ["A"]
        }
        \end{lstlisting}
    \end{minipage}
    \begin{minipage}[t]{0.22\textwidth}
        \begin{lstlisting}[language=json]
        {
          "id": 3",
          "rel": [2],
          "main": "D",
          "types": ["D"]
        }
        \end{lstlisting}
    \end{minipage}
    \begin{minipage}[t]{0.26\textwidth}
        \begin{lstlisting}[language=json]
        {
          "id": 4",
          "parent": 2,
          "rel": [2],
          "main": "C",
          "types": ["A","C"]
        }
        \end{lstlisting}
    \end{minipage}
    \begin{minipage}[t]{0.26\textwidth}
        \begin{lstlisting}[language=json]
        {
          "id": 5",
          "related": [2],
          "main": "B",
          "types": ["B","C"]
        }
        \end{lstlisting}
    \end{minipage}
    \caption{Example JSON documents}
    \label{fig:example_docs}
\end{figure*}

First, we have the inclusion dependency $\mathcal{D}:\texttt{\$.parent}\subseteq\texttt{\$.id}$ which is simply between the atomic values at these paths and equivalent to a traditional inclusion dependency we may have expressed if this data were represented using the relational model.
We also have the inclusion dependency and $\mathcal{D}:\texttt{\$.rel[*]}\subseteq\texttt{\$.id}$.
This is because all values at the paths \texttt{\$.parent} and \texttt{\$.id} appear in some values of the path \texttt{\$.rel[*]}.
These dependencies could not be expressed using the traditional notion of inclusion dependencies since the flat relational model used does not permit arrays of values.
Nevertheless, these dependencies express the useful fact that the paths \texttt{\$.parent} and \texttt{\$.rel[*]} can be considered as references to values at the path \texttt{\$.id}.

As discussed previously, we can also express a dependencies between Amazon product identifiers in the dataset in Figure~\ref{fig:amz_data}.
For example, $\mathcal{D}: \texttt{\$.related.also\_viewed}\subseteq\texttt{\$.asin}$.
This expresses that all items at the path \texttt{\$.related.also\_viewed} also appear as values in some document at the path \texttt{\$.asin}.
That is, all products a customer has viewed are represented as references to the identifier of the product in another document.

\subsection{Inference Rules}

Inference rules similar to those for relational inclusion dependencies apply equally to nested inclusion dependencies. 
These inference rules are important since the mining algorithms we explore depend on their validity.
We provide brief proofs below.

\begin{lemma}[Reflexivity of NINDs]
The nested inclusion dependency $\mathcal{D}:\mathbf{P_1}\subseteq \mathbf{P_2}$ always holds if $P_1=P_2$.
\end{lemma}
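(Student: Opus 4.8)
The plan is to unfold the definition of full inclusion and show that each value at path $P_1$ is trivially fully included in $\mathcal{D}[P_2]$ when $P_1 = P_2$. Since $P_1 = P_2 = P$, for every document $d \in \mathcal{D}$ we have $d[P_1] = d[P]$ and this value occurs in $\mathcal{D}[P_2] = \mathcal{D}[P]$ by the very definition of $\mathcal{D}[P]$ as the set of all values obtained by evaluating $d'[P]$ over all $d' \in \mathcal{D}$ (take $d' = d$).

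First I would split on the case analysis already present in the definition of $d[P_1] \Subset \mathcal{D}[P_2]$. If $v = d[P]$ is atomic, then since $v \in \mathcal{D}[P] = \mathcal{D}[P_2]$, condition 1 is immediately satisfied. If instead $a = d[P]$ is an array, I would invoke condition 3: I need every element $v \in a$ to appear either as an atomic value or inside an array in $\mathcal{D}[P_2]$. But $a$ itself is an array in $\mathcal{D}[P] = \mathcal{D}[P_2]$ (again witnessed by $d' = d$), so every $v \in a$ lies within an array in $\mathcal{D}[P_2]$, matching case 2 of the inner membership test invoked by condition 3. Hence $d[P_1] \Subset \mathcal{D}[P_2]$ in both cases.

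Since $d$ was arbitrary, $d[P_1] \Subset \mathcal{D}[P_2]$ holds for all $d \in \mathcal{D}$, which is exactly the condition for the nested inclusion dependency $\mathcal{D}: P_1 \subseteq P_2$ to hold. For the general statement with sequences $\mathbf{P_1}, \mathbf{P_2}$ of paths on each side, the same argument applies componentwise: equality of the two sequences means each aligned pair of paths is identical, so the tuple of values at $\mathbf{P_1}$ for any document is, component by component, present among the tuples of values at $\mathbf{P_2}$ across $\mathcal{D}$.

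I do not anticipate a genuine obstacle here, since reflexivity is essentially immediate from the definitions; the only point requiring slight care is the array case, where one must notice that condition 3 does not demand the whole array $a$ to appear as an element of $\mathcal{D}[P_2]$ but only that each of its elements be covered, and this is handed to us for free because $a$ is literally one of the arrays contributing to $\mathcal{D}[P_2]$. Making that observation explicit is the one step worth writing out carefully.
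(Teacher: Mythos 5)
Your proof is correct and follows essentially the same route as the paper's: a case split on whether $d[P_1]$ is atomic or an array, with the document $d$ itself witnessing membership in $\mathcal{D}[P_2]$ in both cases. You are slightly more explicit than the paper about which clause of the full-inclusion definition is invoked in the array case (condition 3 reducing to condition 2), but the argument is the same.
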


\begin{proof}
For the dependency to hold, we must have for all $d\in\mathcal{D}$, $d\left[P_1\right]\Subset \mathcal{D}\left[P_2\right]$.
For each $d\in\mathcal{D}$, $d\left[P_1\right]$ is either an atomic value or an array.
Since paths are the same on both sides of both sides of the dependency, if $d\left[P_1\right]$ atomic value, we have $d\left[P_1\right]=d\left[P_2\right]$ and the first condition for full inclusion is satisfied.
If $d\left[P_1\right]$ is an array, then each value of $d\left[P_1\right]$ is also contained in the array $d\left[P_2\right]$ since the two are equivalent.
Therefore, $d\left[P_1\right]$ is fully included in $\mathcal{D}\left[P_2\right]$ and the dependency holds.
\end{proof}

\begin{lemma}[Transitivity of NINDs]
Given two inclusion dependencies $\mathcal{D}:P_1\subseteq P_2$ and $\mathcal{D}:P_2\subseteq P_3$, the dependency $\mathcal{D}:P_1\subseteq P_3$ also holds.
\end{lemma}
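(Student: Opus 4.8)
The plan is to reduce everything to a single auxiliary statement about atomic values and then chase that statement along the two given dependencies. First I would introduce a convenient shorthand: say that an atomic value $v$ is \emph{atomically covered} by $\mathcal{D}\left[P\right]$ if either $v\in\mathcal{D}\left[P\right]$ as an atomic value (condition~1 of full inclusion) or $v$ occurs inside some array $d'\left[P\right]$ for a document $d'\in\mathcal{D}$ (condition~2). With this language the definition of $\Subset$ collapses to two cases: $d\left[P_1\right]\Subset\mathcal{D}\left[P_2\right]$ holds exactly when $d\left[P_1\right]$ is an atomic value that is atomically covered by $\mathcal{D}\left[P_2\right]$, or $d\left[P_1\right]$ is an array every element of which is atomically covered by $\mathcal{D}\left[P_2\right]$ (condition~3). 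Note the definition only ever nests arrays one level deep, so no further cases arise.

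The key sub-lemma I would prove is: if an atomic value $v$ is atomically covered by $\mathcal{D}\left[P_2\right]$ and $\mathcal{D}:P_2\subseteq P_3$ holds, then $v$ is atomically covered by $\mathcal{D}\left[P_3\right]$. To see this, unpack the two ways $v$ can be atomically covered by $\mathcal{D}\left[P_2\right]$: either there is a document $d'$ with $d'\left[P_2\right]=v$ atomic, or there is a $d'$ with $d'\left[P_2\right]$ an array containing $v$. In the first case, $\mathcal{D}:P_2\subseteq P_3$ gives $d'\left[P_2\right]\Subset\mathcal{D}\left[P_3\right]$, and since $d'\left[P_2\right]$ is atomic this is precisely condition~1 or~2, i.e.\ $v$ is atomically covered by $\mathcal{D}\left[P_3\right]$. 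In the second case, $\mathcal{D}:P_2\subseteq P_3$ gives $d'\left[P_2\right]\Subset\mathcal{D}\left[P_3\right]$ with $d'\left[P_2\right]$ an array, so by condition~3 every element of that array---in particular $v$---is atomically covered by $\mathcal{D}\left[P_3\right]$.

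Finally I would assemble the lemma. Fix an arbitrary $d\in\mathcal{D}$; since $\mathcal{D}:P_1\subseteq P_2$ holds we have $d\left[P_1\right]\Subset\mathcal{D}\left[P_2\right]$. If $d\left[P_1\right]$ is atomic, it is atomically covered by $\mathcal{D}\left[P_2\right]$, hence by the sub-lemma atomically covered by $\mathcal{D}\left[P_3\right]$, so $d\left[P_1\right]\Subset\mathcal{D}\left[P_3\right]$. If $d\left[P_1\right]$ is an array, every element is atomically covered by $\mathcal{D}\left[P_2\right]$, hence each is atomically covered by $\mathcal{D}\left[P_3\right]$ by the sub-lemma, so again $d\left[P_1\right]\Subset\mathcal{D}\left[P_3\right]$ via condition~3. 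As $d$ was arbitrary, $\mathcal{D}:P_1\subseteq P_3$ holds.

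The main obstacle is not any hard calculation but getting the bookkeeping right: the definition of $\Subset$ is asymmetric, mixing atomic membership with membership inside arrays and allowing arrays on both sides (though only one level deep). The real work is isolating the \emph{atomically covered} predicate so that the case split does not blow up into many interacting subcases; once that abstraction is in place, the argument is a routine two-step chase through the two given dependencies.
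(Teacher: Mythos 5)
Your proof is correct, and it is in fact more careful than the one in the paper. The paper's argument simply asserts that ``the full inclusion relationship amounts to containment testing'' and is therefore transitive; but $\Subset$ is not a homogeneous binary relation one can compose directly --- it relates a single document's value $d\left[P_1\right]$ to an entire collection of values $\mathcal{D}\left[P_2\right]$, so the statement ``$\Subset$ is transitive'' does not even typecheck without further work. Your proof supplies exactly the missing step: you factor the definition through the \emph{atomically covered} predicate on individual atomic values, prove that atomic coverage is preserved when chased through a single dependency (handling both the atomic-witness and array-witness cases), and then lift that back to $\Subset$ by the atomic/array case split. This is the same high-level idea as the paper (chase each value through $P_2$ into $P_3$), but your decomposition makes explicit why the chase is legitimate and keeps the case analysis from multiplying. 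The one caveat, which you correctly flag, is that both your argument and the paper's definition assume array elements are atomic (condition~3 only refers back to conditions~1 and~2); if arrays could nest, the definition of $\Subset$ itself would need to be recursive and the sub-lemma would become an induction on nesting depth.
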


\begin{proof}
The validity of this rule depends on the transitivity of the full inclusion relationship ($\Subset)$.
If $\mathcal{D}:P_1\subseteq P_2$ holds, then all values at the path $P_1$ are fully included in some value at the path $P_2$.
Similarly, if $\mathcal{D}:P_2\subseteq P_3$ holds, then all values at the path $P_2$ are fully included in some value at the path $P_3$.
Since our full inclusion relationship amounts to containment testing, it is transitive which implies that all values at the path $P_1$ must be fully contained in some value at the path $P_3$.
Therefore, the dependency $\mathcal{D}:P_1\subseteq P_3$ must hold.
\end{proof}

\section{Nested Functional Dependencies}\label{sec:nfd}

We now provide a similar definition and set of inference rules for nested functional dependencies.
Like nested inclusion dependencies, these dependencies generalize functional dependencies in the presence of nested data.

\subsection{Definition}

A \emph{nested functional dependency} (NFD) takes similar form to a traditional functional dependency.
The nested functional dependency $\mathcal{D}:\mathbf{P}_1\rightarrow \mathbf{P}_2$ indicates that the values at the set of paths $\mathbf{P}_1$ in $\mathcal{D}$ collectively determine the values at the paths $\mathbf{P}_2$.
This notion of collective determinacy is equivalent to the notion of determinacy in traditional functional dependencies when the JSON document has no array values.
We start by considering functional dependencies with a single attribute on each side.
For simplicity, we treat any atomic values as arrays with a single element, e.g., a value $3$  will be interpreted as $\left[3\right]$.
We say that the values $\mathcal{D}\left[P_{1,1}\cdots P_{1,m}\right]$ collectively determine $\mathcal{D}\left[P_{2,1}\cdots P_{2,n}\right]$ iff for all documents $d_1,d_2\in \mathcal{D}$, one of the following conditions holds:

\begin{enumerate}
  \item $d_1\left[P_{1,i}\right]\cap d_2\left[P_{1,i}\right]=\emptyset$ for some $i\in\{1\ldots m\}$ or
  \item $d_1\left[P_{2,j}\right]\cap d_2\left[P_{2,j}\right]\neq \emptyset$ for all $j\in\{1\ldots n\}$.
\end{enumerate}
That is, whenever the values at all paths in $\textbf{P}_1$ for a document $d_1$ overlap with the values in a second document $d_2$, the values at each path in $\textbf{P}_2$ for document $d_1$ must also overlap with the values for each path in $d_2$.

Note that since we treat atomic values as single-element sets, intersection is equivalent to equality for atomic values, making nested functional dependencies a superset of traditional functional dependencies.

\subsection{Examples}

Considering again the example documents in Figure~\ref{fig:example_docs}, one possible example of a nested functional dependency is $\texttt{\$.id}\rightarrow\texttt{\$.parent}$.
Since the values at \texttt{\$.id} serve as keys for the document, this is no different from the traditional notion of functional dependencies other than the data being represented as JSON.
We also have the dependency $\texttt{\$.id}\rightarrow\texttt{\$.rel[*]}$.
Again, this has a rough equivalent to the traditional notion of functional dependencies.
The primary difference is that the values on the right side of this dependency are arrays.

Consider again the example of Amazon product data in Figure~\ref{fig:amz_data}.
Each element of array at the path \texttt{\$.categories} represents a sequence of categories and subcategories.
We can represent the relationship between the top-level category and the first subcategory with the nested functional dependency $\texttt{\$.categories[*][1]}\rightarrow\texttt{\$.categories[*][0]}$.
This indicates that whenever two documents have first subcategories which are the same, they must also have top-level category in common.

\subsection{Inference Rules}\label{subsec:ind_rules}

Armstrong's axioms are a standard set of inference rules applied to functional dependencies including reflexivity, augmentation, and transitivity.
Similar axioms hold for our definition of nested functional dependencies which we define and prove below.
As with nested inclusion dependencies, the validity of these axioms is important to ensure the correctness of our mining algorithms.
Note that while these axioms hold for all nested functional dependencies, we also found model checking useful in the construction of our proofs.
We used the Alloy model checker~\cite{Jackson2002} to construct a model of nested functional dependencies which we provide in Appendix~\ref{sec:alloy}.

\begin{lemma}[Reflexivity of NFDs]
The nested functional dependency $\mathcal{D}:\mathbf{P_1}\rightarrow \mathbf{P_2}$ always holds if $\mathbf{P}_2\subseteq\mathbf{P}_1$.
\end{lemma}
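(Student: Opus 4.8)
The plan is to unfold the definition of collective determinacy and show that the hypothesis $\mathbf{P}_2 \subseteq \mathbf{P}_1$ forces the second disjunct of that definition to hold whenever the first one fails. First I would fix two arbitrary documents $d_1, d_2 \in \mathcal{D}$ and, aiming to verify the defining condition for this pair, assume that condition~(1) fails; that is, $d_1\left[P_{1,i}\right] \cap d_2\left[P_{1,i}\right] \neq \emptyset$ for every $i \in \{1,\ldots,m\}$. The goal is then to derive condition~(2), namely $d_1\left[P_{2,j}\right] \cap d_2\left[P_{2,j}\right] \neq \emptyset$ for every $j \in \{1,\ldots,n\}$.

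The key step is to exploit the containment of the path \emph{sets}. Since $\mathbf{P}_2 \subseteq \mathbf{P}_1$, for each $j$ there is an index $i(j)$ with $P_{2,j} = P_{1,i(j)}$. Because the extracted value $d\left[P\right]$ depends only on the document $d$ and the path $P$, this gives $d_1\left[P_{2,j}\right] = d_1\left[P_{1,i(j)}\right]$ and $d_2\left[P_{2,j}\right] = d_2\left[P_{1,i(j)}\right]$. Combining these equalities with the non-emptiness supplied by the failure of condition~(1) at index $i(j)$, we get $d_1\left[P_{2,j}\right] \cap d_2\left[P_{2,j}\right] = d_1\left[P_{1,i(j)}\right] \cap d_2\left[P_{1,i(j)}\right] \neq \emptyset$. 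Quantifying over $j$ yields condition~(2), so at least one of the two defining conditions holds for the pair $(d_1,d_2)$; since the pair was arbitrary, the NFD $\mathcal{D}:\mathbf{P}_1 \rightarrow \mathbf{P}_2$ holds.

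For completeness I would also dispatch the degenerate cases: if $\mathbf{P}_2 = \emptyset$ then condition~(2) is vacuously true for every pair, and if some $d_i\left[P\right]$ is atomic, the definition's convention of treating atomic values as singleton sets keeps the intersection manipulation valid (and matches the fact that the same argument specializes to Armstrong reflexivity in the unnested case). There is no genuine obstacle here — the lemma is essentially immediate from the definition — so the only things to be careful about are not conflating containment of \emph{paths} ($\mathbf{P}_2 \subseteq \mathbf{P}_1$) with containment of the \emph{values} those paths produce, and phrasing the argument symmetrically in $d_1$ and $d_2$ so that it genuinely establishes the ``for all documents $d_1,d_2$'' form of the definition.
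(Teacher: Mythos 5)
Your proof is correct and follows essentially the same route as the paper's: since $\mathbf{P}_2\subseteq\mathbf{P}_1$, whenever condition~(1) of the definition fails for a pair of documents, the non-empty intersections at the $\mathbf{P}_1$ paths transfer directly to every path of $\mathbf{P}_2$, giving condition~(2). Your version is in fact more careful than the paper's one-sentence argument (explicitly fixing the pair $(d_1,d_2)$, indexing $P_{2,j}=P_{1,i(j)}$, and noting the path-versus-value distinction), but the underlying idea is identical.
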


\begin{proof}
If $\textbf{P}_2\subseteq\mathbf{P}_1$, then by definition, values at each path in $\mathbf{P}_2$ will have a non-empty intersection with the values at paths in $\mathbf{P}_1$, so $\mathcal{D}: \mathbf{P_1}\rightarrow\mathbf{P}_2$ holds.
\end{proof}

\begin{lemma}[Augmentation of NFDs]
If $\mathbf{P}_1$, $\mathbf{P}_2$, and $\mathbf{P}_3$ are sets of paths and $\mathcal{D}: \mathbf{P}_1\rightarrow\mathbf{P}_2$, then $\mathcal{D}:\mathbf{P}_1\mathbf{P}_3\rightarrow\mathbf{P}_2\mathbf{P}_3$ also holds.
\end{lemma}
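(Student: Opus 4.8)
The plan is to argue directly from the definition of collective determinacy, reducing the claim about the augmented dependency to the hypothesis $\mathcal{D}:\mathbf{P}_1\rightarrow\mathbf{P}_2$. Write $\mathbf{P}_1\mathbf{P}_3$ and $\mathbf{P}_2\mathbf{P}_3$ for the combined path sets on each side of the conclusion, treating the concatenation as the set union of the constituent paths. Fix an arbitrary pair of documents $d_1,d_2\in\mathcal{D}$; it suffices to show that one of the two conditions in the definition of a nested functional dependency holds for this pair with respect to $\mathbf{P}_1\mathbf{P}_3\rightarrow\mathbf{P}_2\mathbf{P}_3$.

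First I would dispose of the trivial case: if $d_1[P]\cap d_2[P]=\emptyset$ for some path $P\in\mathbf{P}_1\mathbf{P}_3$, then condition~(1) is satisfied and there is nothing more to do. So assume instead that $d_1[P]\cap d_2[P]\neq\emptyset$ for every $P\in\mathbf{P}_1\mathbf{P}_3$. In particular this holds for every $P\in\mathbf{P}_1$, so condition~(1) of the hypothesis $\mathcal{D}:\mathbf{P}_1\rightarrow\mathbf{P}_2$ fails for the pair $(d_1,d_2)$; since that dependency holds, its condition~(2) must hold, giving $d_1[P]\cap d_2[P]\neq\emptyset$ for every $P\in\mathbf{P}_2$. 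Combining this with the non-empty intersections on every path of $\mathbf{P}_3$ (from our standing assumption) yields $d_1[P]\cap d_2[P]\neq\emptyset$ for every $P\in\mathbf{P}_2\mathbf{P}_3$, which is exactly condition~(2) for the augmented dependency. As $d_1,d_2$ were arbitrary, $\mathcal{D}:\mathbf{P}_1\mathbf{P}_3\rightarrow\mathbf{P}_2\mathbf{P}_3$ holds.

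The only delicate point is the bookkeeping on the path sets: one must observe that $\mathbf{P}_1\mathbf{P}_3$ contains every path of $\mathbf{P}_1$ (so the hypothesis is applicable) and that $\mathbf{P}_2\mathbf{P}_3$ is contained in $\mathbf{P}_2\cup\mathbf{P}_3$ (so the two non-emptiness facts together cover all of its paths), including the harmless case in which a path occurs in both $\mathbf{P}_3$ and one of $\mathbf{P}_1,\mathbf{P}_2$. Notably, no case split on whether a value is atomic or an array is required: the definition is phrased uniformly via set intersection once atomic values are read as singletons, which is precisely why the classical Armstrong augmentation argument transfers almost verbatim. I therefore do not expect any substantive obstacle here beyond stating the quantifier manipulations carefully.
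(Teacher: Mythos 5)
Your proof is correct and follows essentially the same route as the paper's: assume all paths in $\mathbf{P}_1\mathbf{P}_3$ overlap, deduce the $\mathbf{P}_2$ overlaps from the hypothesis and the $\mathbf{P}_3$ overlaps trivially from the standing assumption. If anything, your version is the more carefully quantified one --- the paper's final sentence even states the implication in the wrong direction --- so no changes are needed.
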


\begin{proof}
For $\mathcal{D}:\mathbf{P}_1\mathbf{P}_3\rightarrow\mathbf{P}_2\mathbf{P}_3$ to hold, we must have all paths in $\mathbf{P}_2\mathbf{P}_3$ overlapping when paths in $\mathbf{P}_1\mathbf{P}_3$ overlap.
We know that values at the paths in $\mathbf{P}_2$ overlap when values at the paths in $\mathbf{P}_1$ overlap since $\mathcal{D}:\mathbf{P}_1\rightarrow\mathbf{P}_2$ holds.
Values at the paths in $\mathbf{P}_3$ always overlap with values at the paths in $\mathbf{P}_3$ since they are equivalent.
Therefore values at the paths in $\mathbf{P}_1\mathbf{P}_3$ will overlap if values in $\mathbf{P}_2\mathbf{P}_3$ overlap and $\mathcal{D}:\mathbf{P}_1\mathbf{P}_3\rightarrow\mathbf{P}_2\mathbf{P}_3$ holds.
\end{proof}

\begin{lemma}[Transitivity of NFDs]
If the nested functional dependencies $\mathcal{D}:\mathbf{P}_1\rightarrow \mathbf{P}_2$ and $\mathcal{D}:\mathbf{P}_2\rightarrow\mathbf{P}_3$ hold, then the dependency $\mathcal{D}:\mathbf{P}_1\rightarrow\mathbf{P}_3$ also holds.
\end{lemma}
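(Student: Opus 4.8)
The plan is to unfold the definition of NFD for a single pair of documents and chain the two given implications on that pair. Fix arbitrary documents $d_1,d_2\in\mathcal{D}$; I need to exhibit one of the two defining conditions for $\mathcal{D}:\mathbf{P}_1\rightarrow\mathbf{P}_3$. If condition (1) already holds for this pair, i.e.\ $d_1\left[P_{1,i}\right]\cap d_2\left[P_{1,i}\right]=\emptyset$ for some path $P_{1,i}\in\mathbf{P}_1$, there is nothing to prove. So I would assume the complementary case: the values at every path of $\mathbf{P}_1$ overlap between $d_1$ and $d_2$.

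Next I would invoke the first hypothesis. Since $\mathcal{D}:\mathbf{P}_1\rightarrow\mathbf{P}_2$ holds, its defining property applies to the pair $(d_1,d_2)$; because condition (1) fails for this pair, condition (2) must hold, giving $d_1\left[P_{2,j}\right]\cap d_2\left[P_{2,j}\right]\neq\emptyset$ for every path $P_{2,j}\in\mathbf{P}_2$. I would then feed this conclusion into the second hypothesis: $\mathcal{D}:\mathbf{P}_2\rightarrow\mathbf{P}_3$ holds, and since all $\mathbf{P}_2$ paths overlap for $(d_1,d_2)$, condition (1) fails again, forcing condition (2), namely $d_1\left[P_{3,k}\right]\cap d_2\left[P_{3,k}\right]\neq\emptyset$ for every path $P_{3,k}\in\mathbf{P}_3$. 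That is precisely condition (2) for $\mathcal{D}:\mathbf{P}_1\rightarrow\mathbf{P}_3$ on this pair, and since $d_1,d_2$ were arbitrary, the dependency holds.

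The one subtlety worth flagging explicitly — and the step I expect a careful reader to scrutinize — is that this argument does \emph{not} use transitivity of the overlap relation $\cap\neq\emptyset$ (which indeed fails in general). Both given dependencies are applied to the \emph{same} pair $(d_1,d_2)$, so what is chained is the implication ``all $\mathbf{P}_1$-values overlap $\implies$ all $\mathbf{P}_2$-values overlap'' followed by ``all $\mathbf{P}_2$-values overlap $\implies$ all $\mathbf{P}_3$-values overlap'', each for that fixed pair; no single comparison is ever carried across three distinct value sets. I would state this observation in the proof so that the result is not mistaken for an (invalid) appeal to transitivity of set intersection.
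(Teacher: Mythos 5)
Your proof is correct and follows essentially the same route as the paper's: fix a pair of documents, assume the $\mathbf{P}_1$-values all overlap, and chain the two hypotheses on that same pair to conclude that the $\mathbf{P}_3$-values all overlap. Your version is in fact slightly more careful than the paper's about the quantifiers (the negation of condition (1) is overlap at \emph{every} path of $\mathbf{P}_1$, which is what licenses invoking condition (2) of each hypothesis), and your closing remark that no transitivity of the non-transitive overlap relation is being used is a worthwhile clarification.
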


\begin{proof}
If $\mathcal{D}:\mathbf{P}_1\rightarrow\mathbf{P}_3$ is to hold, we must show that either $d_1\left[P_{1,i}\right]\cap d_2\left[P_{1,i}\right]$ is empty or $d_1\left[P_{3,i}\right]\cap d_2\left[P_{3,i}\right]$ is \emph{not} empty $\forall i\in\{1\ldots n\}$.
We consider the case where $d_1\left[P_{1,i}\right]\cap d_2\left[P_{1,i}\right]$ is not empty.
If this is the case, then we must have that $d_1\left[P_{2,i}\right]\cap d_2\left[P_{2,i}\right]$ is not empty since $\mathcal{D}:\mathbf{P}_1\rightarrow\mathbf{P}_2$ holds.
As a result, we know that $d_1\left[P_{3,i}\right]\cap d_2\left[P_{3,i}\right]$ must hold since $\mathcal{D}:\mathbf{P}_2\rightarrow\mathbf{P}_3$.
Therefore, $\mathcal{D}:\mathbf{P}_1\rightarrow\mathbf{P}_3$.
\end{proof}

\section{Discovery Algorithms}

We implement our techniques on top of the SPIDER~\cite{Bauckmann2007} algorithm and the mining algorithm by De Marchi~\cite{DeMarchi2002} et al. for inclusion dependency mining and the TANE~\cite{Huhtala1999} and FDep~\cite{Flach1999} algorithms for functional dependency mining.
(For convenience, we refer to the algorithm proposed by De Marchi et al. in the remainder of this work as \emph{DeMarchi}.)
These algorithms were selected for their ease of implementation and generally strong performance.
We expect it will be possible to adapt our techniques to a wide variety of other mining algorithms.

We can use both of these algorithms with minimal changes by modifying the input via \emph{static unrolling} which we explain in Section~\ref{subsec:static}.
That is, we take the nested JSON data and turn it into a flat relational table.
However, this has the disadvantage of significantly expanding the data that must be processed since we must materialize the cross-product of all nested objects.

We note that all the algorithms we consider consist of two phases.
First, metadata is built based on reading the input.
Second, this metadata is processed to determine valid dependencies.
Section~\ref{subsec:dynamic} introduces \emph{dynamic unrolling}, that instead provides the values of nested objects incrementally without materialization.
Sections~\ref{subsec:spider} to~\ref{subsec:fdep} then describe our modifications of SPIDER, DeMarchi, TANE, and FDep respectively to work with dynamic unrolling. 
In Section~\ref{subsec:approx}, we also show how we can adapt these algorithms to generate approximate dependencies which tolerate missing or incorrect values.

\subsection{Static Unrolling}\label{subsec:static}

One way to use existing relational algorithms on existing data is to flatten nested structures into a relational schema.
As shown in Section~\ref{subsec:inds}, this can significantly grow the number of tuples which need to be processed since we need to include documents for all combinations of nested arrays.

We define static unrolling recursively.
There are three cases to consider: objects, arrays, and atomic values.
For objects, we recursively unroll each key and take the cross-product of all generated documents into a single document with all the paths.
For arrays, we simply unroll each element and collect all the generated documents.
Finally, for atomic values, we produce an array with a single document with the given key and value.
The documents produced from this process are free of nesting and all have the same keys (assigning null values to keys which don't exist in a particular document).
This allows us to treat them as relational tuples.
The full algorithm is given in Algorithm~\ref{alg:static}.

\begin{algorithm}[ht]
\SetKwFunction{Flatten}{Flatten}
\SetKwFunction{Unroll}{Unroll}
\SetKwProg{myproc}{Procedure}{}{}
{\myproc{\Flatten{path, value}}{
\uIf{value is empty}{\tcp{Produce a single empty value}\Return{[\{path: ''\}]}}
\uElseIf{value is object}{
  \tcp{Take the cross-product of documents for each key}
  $docs$ $\leftarrow$ [] \\\For{(key, objValue) in value}{$path$ $\leftarrow$ $path$ + '.' + $key$\\$new$ $\leftarrow$ \Flatten{path, objValue}\\$docs$ $\leftarrow$ $docs\times new$}\Return{docs}
}
\uElseIf{value is array}{
  \tcp{Combine flattened documents}
  $path$ $\leftarrow$ $path$ + '[*]'\\
  $docs$ $\leftarrow$ []\\\For{arrayValue in value}{\For{doc in \Flatten{path, arrayValue}}{$docs$.append($doc$)}}\Return{docs}
}
\uElse{\Return{[\{path: value\}]}}
}}
\bigskip
{\myproc{\Unroll{input}}{
$rows$ $\leftarrow$ [] \\
\For{jsonObject in input}{
  \For{row in \Flatten{null, jsonObject}}{$rows$.append($row$)}
}\Return{rows}}}
\caption{Static unrolling}\label{alg:static}
\end{algorithm}

The number of rows scales with the size of nested arrays.
Furthermore, when using static unrolling, the runtime is not proportional to the number of documents, but the number of unrolled rows.
As we show later in Section~\ref{sec:eval}, this has a significant impact on performance.
In the following section, we show a generic approach for adapting existing algorithms without the overhead of static unrolling.

\subsection{Dynamic Unrolling}\label{subsec:dynamic}

As discussed previously, the first step in many mining algorithms is to collect metadata from the input. 
This metadata is traditionally collected row by row from relational datasets.
When performing this step on nested JSON data, flattening is required.
This means a single JSON document can yield a large number of rows as described in the previous section.
Looking at the metadata construction step, we make the important observation that the metadata can instead be constructed simply by providing a series of attribute-value pairs as opposed to an entire document.

To this end, we take the approach of \emph{dynamic} unrolling.
Dynamic unrolling replaces the metadata construction step of the mining algorithm by dynamic traversal of the JSON documents instead of flattening.
This significantly reduces the total size of the data which needs to be processed.
This algorithm makes use of two abstract functions: \texttt{InitializeMetadata} and \texttt{UpdateMetadata}.
This makes the dynamic unrolling approach adaptable to a number of different dependency mining algorithms.

\begin{algorithm}[ht]
\SetKwFunction{Collect}{Collect}
\SetKwFunction{UpdateMetadata}{UpdateMetadata}
\SetKwFunction{InitializeMetadata}{InitializeMetadata}
\SetKwProg{myproc}{Procedure}{}{}
{\myproc{\Collect{id, meta, path, value}}{
\uIf{value is a object}{\For{(key, objValue) in value}{\Collect{id, meta, path + \textquotesingle.\textquotesingle + key, objValue}}}
\uElseIf{value is an array}{\For{arrayValue in value}{\Collect{id, meta, path + \textquotesingle[*]\textquotesingle, arrayValue}}}
\ElseIf{value is not empty}{\UpdateMetadata{id, meta, path, value}}
}}
\bigskip
meta $\leftarrow$ \InitializeMetadata{} \\
$id$ $\leftarrow$ 1 \\
\For{jsonObject in input}{\For{(key, value) in jsonObject}{\Collect{id, meta, \textquotesingle\$.\textquotesingle + key, value}} $id$ $\leftarrow$ $id$ + 1}
\caption{Dynamic unrolling}\label{alg:dynamic}
\end{algorithm}

The \texttt{Collect} function simply recursively explores JSONPaths in the document, calling the \texttt{UpdateMetadata} function each time a new value at a path is discovered.
Each document is also assigned a unique ID which is useful when associating paths with which document they are derived from.
Depending on the nested structure of the data, dynamic unrolling can result in the algorithm processing an order of magnitude or more less data.
The second step differs for each mining algorithm, but generally requires minimal modifications and validates the dependencies according to the collected metadata which has a similar format to the metadata collected in the original implementations.

\subsection{Dynamic SPIDER}\label{subsec:spider}

\begin{algorithm}[ht]
\SetKwFunction{InitializeSPIDERMetadata}{InitializeSPIDERMetadata}
\SetKwFunction{UpdateSPIDERMetadata}{UpdateSPIDERMetadata}
\SetKwProg{myproc}{Procedure}{}{}
{\myproc{\InitializeSPIDERMetadata{}}{
\tcp{Start with an empty object}
\Return \{\}
}}
\bigskip
{\myproc{\UpdateSPIDERMetadata{id, meta, path, value}}{
\uIf{path in meta}{
\tcp{Add the value to the set of values at this path}
$meta$[$path$].add($value$)
}\uElse{
\tcp{Start a new set of values at this path}
$meta$[$path$] = \{$value$\}
}
}}
\caption{Metadata updates for SPIDER}\label{alg:spider-meta}
\end{algorithm}

SPIDER operates by starting with the assumption that all inclusion dependencies hold.
Then for each value $v$, if attribute $A$ has value $v$, the list of valid inclusion dependencies is updated to remove any inclusion dependency $A\subseteq B$ if $B$ does not also contain value $v$.
The first step in the process is producing a sorted list of all unique values for all documents at a given path.
The metadata update functions given in Algorithm~\ref{alg:spider-meta} collect the necessary information.

In our case, instead of collecting a single atomic value per column for each tuple in a relation, we collect all possible values at each path in a document.
Since the document a value was collected from is not relevant to the mining process, it is perfectly acceptable for a document to contribute multiple values.
After the metadata collection step using dynamic unrolling completes, we can proceed with the original SPIDER algorithm.
Note that in the case of JSON documents with no nesting, we will find exactly the same set of dependencies as if we converted each document to the relational model and used the original SPIDER implementation.

\subsection{Dynamic DeMarchi}\label{subsec:demarchi}

The DeMarchi algorithm is similar to SPIDER, but tracks metadata in the opposite direction.
Instead of collecting a set of values for a given path, we collect all paths for a given value.
The metadata updates required are shown in Figure~\ref{alg:demarchi-meta}.

As with SPIDER, DeMarchi starts by assuming that all inclusion dependencies are valid.
Dependencies are invalidated by checking each path recorded for a value and examining all dependencies where this path is on the left-hand side.
We then restrict the valid right-hand sides of each dependency to only the paths which also have this value.

\begin{algorithm}[ht]
\SetKwFunction{InitializeDeMarchiMetadata}{InitializeDeMarchiMetadata}
\SetKwFunction{UpdateDeMarchiMetadata}{UpdateDeMarchiMetadata}
\SetKwProg{myproc}{Procedure}{}{}
{\myproc{\InitializeDeMarchiMetadata{}}{
\tcp{Start with an empty object}
\Return \{\}
}}
\bigskip
{\myproc{\UpdateDeMarchiMetadata{id, meta, path, value}}{
\uIf{value in meta}{
\tcp{Add the path to the set for this value}
$meta$[$value$].add($path$)
}\uElse{
\tcp{Start a new set of paths for this value}
$meta$[$value$] = \{$path$\}
}
}}
\caption{Metadata updates for DeMarchi}\label{alg:demarchi-meta}
\end{algorithm}


\subsection{Dynamic TANE}\label{subsec:tane}

Our first algorithm for mining functional dependencies is modeled after TANE~\cite{Huhtala1999}.
TANE is based around two key techniques: partitioning input tuples based on column values and a levelwise approach for dependency testing that enables efficient pruning.
Since our data involves multi-valued attributes, we cannot partition documents into distinct sets as is required.
TANE uses partitions of input tuples to efficiently test whether a particular dependency holds by comparing the partitions based on different column values.

To enable a similar approach while allowing multi-valued attributes, we instead construct an adjacency matrix of documents for each possible paths.
Two documents are considered adjacent for a particular attribute if the set of values of that attribute for each document intersect.
Since this adjacency matrix is symmetric and sparse, we use a compressed bitmap representation of the matrix (specifically Roaring bitmaps~\cite{Lemire2016}).
This has several advantages.
First, when we need to check if attributes at two paths intersect, we can simply take the intersection of the two bitmaps.
Second, we can validate a dependency by checking if the bitmap representation of the left-hand side is a subset of the representation of the right-hand side.
This ensures that all documents which intersect on attribute values at paths on the left-hand side also intersect on attribute values at paths on the right-hand side.
This precisely matches our definition of nested functional dependencies.

The main downside of this approach is that it loses the scalability of TANE's approach since the initial construction of these bitmaps requires iterating through all pairs of matching tuples for each path in order to set the appropriate bit in the bitmap.
We provide the metadata updates used in dynamic TANE in Figure~\ref{alg:tane-meta}.

\begin{algorithm}[ht]
\SetKwFunction{InitializeTANEMetadata}{InitializeTANEMetadata}
\SetKwFunction{UpdateTANEMetadata}{UpdateTANEMetadata}
\SetKwProg{myproc}{Procedure}{}{}
{\myproc{\InitializeTANEMetadata{}}{
\tcp{Start with empty objects}
\Return \{$value\_indexes$: \{\}, $load\_partitions$: \{\}\}
}}
\bigskip
{\myproc{\UpdateTANEMetadata{id, meta, path, value}}{
\tcp{Add to the set of discovered paths}
\uIf{value not in meta.value\_indexes}{
  $value\_index$ = $meta$.$value\_indexes$[$value$]
}
$meta$.$value\_indexes$[$value$] $\leftarrow$ |$meta$.$value\_indexes$| \\
\uIf{path not in meta.load\_partitions}{
\tcp{Start a new set of partitions}
$meta$.$load\_partitions$[$path$] = \{\}
}
\uIf{$value\_index$ not in $meta$.$load\_partitions$[$path$]}{
\tcp{Start a new partition for this value}
$meta$.$load\_partitions$[$path$][$value\_index$] = $\emptyset$
}
\tcp{Add the document to the partition}
$meta$.$load\_partitions$.[$path$][$value\_index$].add($id$)
}}
\caption{Metadata updates for TANE}\label{alg:tane-meta}
\end{algorithm}

After the dynamic unrolling step is complete, we proceed as in the original TANE algorithm.
The exception is that instead of using partition refinement to check the validity of dependencies, we use our bitmap intersection test as defined above.
Compared to TANE using static unrolling, our algorithm will produce identical results for collections of documents without arrays.
However, since our definition of nested functional dependencies differs from the traditional functional dependencies mined using TANE, our algorithm produces a different set of dependencies in the presence of documents with array values.

\subsection{Dynamic FDep}\label{subsec:fdep}

FDep~\cite{Flach1999} is described as a ``machine learning`` approach to functional dependency mining.
There are two primary stages to the original algorithm.
First, pairs of tuples are compared to see which column values are equal and which are not.
Based on this information, the \emph{negative cover} is constructed.
The negative cover is a set of dependencies that do not hold based on observed violations.
From this negative cover, FDep then constructs the \emph{positive cover}, which is the remaining dependencies which do hold.

\begin{algorithm}[ht]
\SetKwFunction{InitializeFDepMetadata}{InitializeFDepMetadata}
\SetKwFunction{UpdateFDepMetadata}{UpdateFDepMetadata}
\SetKwProg{myproc}{Procedure}{}{}
{\myproc{\InitializeFDepMetadata{}}{
\tcp{Start with an empty object}
\Return \{$docs$: \{\}, $paths$: $\emptyset$\}
}}
\bigskip
{\myproc{\UpdateFDepMetadata{id, meta, path, value}}{
\tcp{Add to the set of discovered paths}
$meta$.$paths$.add($path$) \\
\uIf{id in meta.docs}{
\tcp{Start a new set of values at this path}
$meta$.$docs$[$id$] = \{$path$: \{$value$\}\}
}\uElse{
\tcp{Add to the set of values at this path}
$meta$.$docs$[$id$][$path$].add($value$)
}
}}
\caption{Metadata updates for FDep}\label{alg:fdep-meta}
\end{algorithm}

For our modifications to FDep to support nested functional dependencies, we only need to modify the first step of negative cover construction.
Once the negative cover construction is complete, the rest of the algorithm can proceed as normal.
The metadata we collect via dynamic unrolling is shown in Algorithm~\ref{alg:fdep-meta}.
We simply gather a set of paths as well as all of the documents with their nested structures flattened to a single path string.
Note that this does not result in the same overhead as static unrolling since we are not generating new values to process, but simply restructuring the existing data within each document.

From here, we can proceed with the negative cover construction according to the original FDep algorithm using the documents in $meta$.$docs$ as input to the algorithm.
However, where FDep checks value equality to determine violated functional dependencies, we simply change the check to set intersection on values at a given path.
This means we can still use the original FDep algorithm mostly unchanged, but we gain the ability to discover nested functional dependencies.
As with our dynamic unrolling version of TANE, this version of the FDep algorithm will discover a different set of dependencies when provided a collection of documents with arrays.

\section{Approximate Mining}\label{subsec:approx}

When dealing with JSON documents, it is common to have missing fields, incorrect data types, and changing document structures.
The result is that even dependencies which are semantically valid may not hold across all documents.
One of goals with this work is to allow the discovery of dependencies even in the presence of these anomalies.
Instead of showing only exact dependencies, we would instead like to provide a set of possible dependencies ranked along with some measure of their ``strength''.

For example, an inclusion dependency $A\subseteq B$ may have some values for attribute $A$ that do not appear in attribute $B$.
For a functional dependency, we may have a small fraction of documents which introduce violations.
That is, a dependency $A\rightarrow B$ may not hold on a collection of documents $\mathcal{D}$, but does hold when some small subset $\mathbf{d}\subset\mathcal{D}$ is removed.
Instead of only reporting dependencies that hold exactly, we allow \emph{approximate dependencies} where we relax our definitions to allow for some missing or incorrect values.

We note that there are two conflicting definitions which are used for approximate dependency mining.
The first refers to algorithms which use sampling or other techniques to reduce runtime at the expense of possibly missing some valid dependencies.
The second definition refers to algorithms which find all valid dependencies but also include dependencies with some specified fraction of violating tuples.
We use the latter definition in this work and we adapt each of the algorithms in the previous sections to mine approximate dependencies.

\begin{figure}[ht]
    \centering
\begin{lstlisting}[language=json]
1: {"a": ["X"], "b": ["X", "Y"]}

2: {"a": ["X", "Y"], "b": ["X", "Z"]}

3: {"a": ["X"], "b": ["Y"]}

4: {"a": ["X"], "b": ["Z"]}
\end{lstlisting}
    \caption{Sample documents to demonstrate approximate mining}\label{fig:approx_data}
\end{figure}

\subsection{Inclusion Dependencies}

As described above, our modifications to the SPIDER algorithm for mining inclusion dependencies first collect sorted values for each path.
It then starts by assuming all dependencies are true and iterates through the sorted values to invalidate dependencies whenever values are found not to be included in another path.
In the original SPIDER algorithm, a dependency $B\subset A$ is invalidated as soon as a value for $B$ is found which is not contained in $A$.
Instead, we count the values of $B$ which are contained in $A$ and allow the user to specify a threshold of included values which must be met in order for the dependency can be considered valid.
This same approach works for our modifications to the DeMarchi algorithm.

Consider the example documents in Figure~\ref{fig:approx_data}.
and the inclusion dependency $\texttt{\$.b[*]}\subset\texttt{\$.a[*]}$.
We can see that the values at key \texttt{b} are \texttt{"X"}, \texttt{"Y"}, and \texttt{"Z"} and the values at key \texttt{a} are \texttt{"X"} and \texttt{"Y"}.
Document 2 contains the value \texttt{"Z"} for key \texttt{b} which does not appear in any document at the key \texttt{a}.
After these values are collected, the comparison step of our inclusion dependency mining algorithms would show that two thirds of the values at the path \texttt{\$.b[*]} are fully included in values at the path \texttt{\$.a[*]}, giving this dependency a strength of 2/3.
Depending on the approximation threshold, we could accept or reject this dependency.
In practice, we would have many more documents and likely set a higher threshold that would reject dependencies of this strength.

\subsection{Functional Dependencies}

The original TANE algorithm included a natural method of incorporating approximate dependencies based on the number of tuples in the input data which violate the dependency.
Using our bitmap formulation, we have no way of easily calculating the number of violating documents.
Instead, we observe conflicting pairs of values.
The same is true in the case of the FDep algorithm.
In the case of FDep, although we express changes in terms of documents, our modifications apply equally to the original relational relational algorithm.
As discussed in Section~\ref{subsec:fdep}, FDep starts by finding the negative cover, that is, a set of dependencies which cannot hold based on conflicting values in pairs of input documents.

In the case of both dynamic TANE and FDep, we can easily obtain a count of the violating pairs of values.
However, setting a threshold for validity based on this metric is unintuitive.
That is, it is difficult to understand how to select a meaningful threshold for this value.
Several other metrics for approximate dependencies have been proposed~\cite{Giannella2004}, but we use
a similar metric to the one proposed in the original TANE implementation, the percentage of violating documents.
This leaves us with the problem of finding the number of individual documents which must be removed for a dependency to hold.
Since we already have pairs of violating documents, we simply need to select one document from each of these pairs, minimizing the total number of documents.
We represent the problem as graph of documents with pairs of violating documents connected by edges.
Then we simply need to find the size of the minimum vertex cover which gives us the smallest number of violating documents.

Since this problem is NP-hard and needs to be solved for every dependency, we opted to resort to a well-known approximate solution.
As we process pairs of violating documents while finding negative dependencies we greedily count both documents in the pair unless one of these documents has been previously counted.
We use a threshold on the percentage of violating documents when selecting valid negative dependencies.
This threshold is approximate since our solution to the minimum vertex cover is also approximate, but we can tune the threshold to meet application requirements based on the approximation.
The final result of approximation for both of these algorithm is a set of dependencies which are valid when a subset of documents are removed from the input.

Again we consider the documents in Figure~\ref{fig:approx_data}.
For the dependency $\texttt{\$.a[*]}\rightarrow\texttt{\$.b[*]}$, we note that all four documents intersect on values for the key \texttt{a}.
When comparing pairs of documents for violations, we would find that documents 2 and 3 as well as 3 and 4 do \emph{not} intersect on values for the key \texttt{b} which would be required by this dependency.
Note that only one of these four documents (document 3) would need to be removed from this collection to make the dependency valid.
However, we are approximating the number of violating documents as outlined above.
Assume we first encounter the violating pair 2 and 3.
We would record that these two documents violate the dependency.
Then when encountering the violating pair 3 and 4, we would see that document 3 has already been recorded, and avoid recording document 4 as a violation.
Finally, the strength of this dependency would be calculated as 2/4 since our approximation has identified two of the four documents as violations.

\section{Evaluation}\label{sec:eval}

Our goal with this section is to compare runtime efficiency of static versus dynamic unrolling for each of our mining algorithms.
We evaluate each algorithms against datasets with varying sizes and levels of nesting to demonstrate the scalability of our dynamic unrolling approach.

All experiments are performed ver 411, 16GB RAM, with data stored on a 256GB Samsung SSD 860 Pro.
Algorithms are implemented in Python 3 and run with an approximation threshold of 99\%.
We limit the runtime of each algorithm to 12 hours.
We use four datasets in our evaluation.
First, we collected data on all 45,000+ cards from the trading card game Magic: The Gathering (MTG) via their public API.
Second, we use a sample of event information collected from the GitHub API.
Third, we use a Amazon product information with a structure similar to the data in Figure~\ref{fig:amz_data}.

To more deeply explore the performance of our algorithms with respect to document complexity, we consider relevant properties of each document.
Since the goal of dependency mining is to collect information about attribute values, the first relevant property we consider is the average number of attribute values per document.
Attribute values we consider are at the deepest level of nesting since those are the ones considered by our dependency mining algorithm.
For example, consider the document below:

\begin{lstlisting}[language=json]
{ "id": 17,
  "authors": [
    {"id": 24, "name": "Maya Angelou"},
    {"id": 56, "name": "Oprah Winfrey"}
  ],
  languages: ["English", "Spanish"] }
\end{lstlisting}

This document has a total of seven attribute values.
One for \texttt{id}, two for each of the two attributes in the \texttt{authors} array, and two for each value of \texttt{languages}.
In addition to the number of attribute values, we also want to consider the level of nesting of these values.
We simply define the level of nesting as the number of objects or arrays that the attribute value is contained in.
For the example above, the value \texttt{17} and \texttt{"English"} is nested at level 2.
The value \texttt{24} is nested at level 3 since it is contained in the top-level document object, the array \texttt{authors}, and an object under the key \texttt{id}.
A simple characterization of our first three datasets according to these properties is given in Table~\ref{tbl:datasets}.

Finally, we use a sampling of posts made to the Reddit social news site.
In the case of the final dataset, we scale the time range under consideration to provide a useful method of testing the scalability of our algorithms.
That is, as we describe later, we consider a variable number of documents in time order.


\subsection{Inclusion Dependency Mining}

We start with an analysis of our dynamically unrolled inclusion dependency mining algorithm.
Our goal is to compare the runtime efficiency of our algorithm on various real-world datasets which we do in Section~\ref{subsubsec:ind-efficiency}.
Furthermore, in Section~\ref{subsubsec:ind-scalability}, we show that dynamic unrolling also scales better to large datasets.

\subsubsection{Efficiency}\label{subsubsec:ind-efficiency}

To demonstrate the effectiveness of our approach of dynamic unrolling, we examined the runtime of dynamic vs. static unrolling on the first three of our datasets.
We ran the each dependency mining algorithm using our modifications of approximations and both static and dynamic unrolling.
The average of three runs for each each dataset, algorithm, and unrolling strategy is given in Table~\ref{tbl:ind_runtime}.

In all cases, dynamic unrolling is more effective than static unrolling.
We see that on most datasets, the DeMarchi algorithm performs significantly better.
However, in the presence of a large arrays as is the case with the Amazon dataset, performance becomes significantly worse.
This is because when multiple values are observed for a path, we need to look up the previously recorded set of values each time to add to the set of these values.
It is possible that further optimizations to our implementation could remove some of this overhead.

\begin{table*}[ht]
\centering
\rowcolors{2}{white}{gray!25}
\begin{tabular}{ccccc}
\textbf{Dataset} & \textbf{Document count} & \textbf{Average size (bytes)} & \textbf{Maximum nesting} & \textbf{Expansion factor} \\
MTG & 45,846    & 2,878 & 2 & 1,481.2$\times$ \\ 
GitHub & 218,939   & 2,292 & 6 & 4.7$\times$ \\ 
Amazon & 9,430,088 & 1,118 & 2 & 5,043.2$\times$ 
\end{tabular}
\caption{JSON dataset descriptions}\label{tbl:datasets}
\end{table*}

\begin{table*}
\centering
\rowcolors{2}{gray!25}{white}
\begin{tabular}{c|ccc|ccc}
\rowcolor{white}
& \multicolumn{3}{c}{\textbf{SPIDER}} & \multicolumn{3}{c}{\textbf{DeMarchi}} \\
& Static & Dynamic & Improvement & Static & Dynamic & Improvement \\
\textbf{MTG} & 716.32 & 19.04 & 37.6$\times$ & 8,828.54 & 42.04 & 210.1$\times$ \\
\textbf{GitHub} & 670.02 & 638.50 & 1.05$\times$ & 242.4 & 223.67 & 1.07$\times$ \\
\textbf{Amazon} & 264,918.68 & 4,289.84 & 61.8$\times$ & >12h & 13,388.16 & -- \\
\end{tabular}
\caption{Inclusion dependency mining runtime in seconds for various datasets}\label{tbl:ind_runtime}
\end{table*}

\subsubsection{Scalability}\label{subsubsec:ind-scalability}

We compare the scalability of our solution of dynamic unrolling on two dimensions.
Firstly, how does the approach scale with an increasing number of documents?
Second, for a fixed collection of documents, how does each approx scale with increasing document complexity?
We answer the first question with a real-world data set and the second by characterizing the complexity of documents and synthetically generating documents with varying degrees of complexity.

To determine how well dynamic unrolling scales with large collections of documents, we make use of a large collection of documents gathered from the Reddit social news service.
We vary the number of documents used as input to the algorithm from 10-200 thousand and measure the average runtime across three executions of the algorithm.
Results are shown in Figure~\ref{fig:ind_scalability} (note that the y-axis is logarithmic).
We can see that dynamic unrolling scales significantly better than static unrolling.
When operating on 200 thousand documents, dependency mining using dynamic unrolling completes over 97$\times$ faster than when using static unrolling.

This is expected since dynamic unrolling processes much less data.
We can see a large increase in runtime at 50 thousand documents when using static unrolling.
This is a result of an increase in document complexity among these documents.
When documents are more complex, static unrolling generates a larger number of records which must be processed.
We refer to the ratio of unrolled records to the original number of documents as the \emph{expansion factor}.
This expansion is not a concern using our dynamic unrolling approach since each attribute requires a single processing action unrelated to the nesting level of the attribute.

We also note that in the dynamic case, the SPIDER and DeMarchi algorithms have very similar performance, while in the case of static unrolling, the DeMarchi algorithm performs significantly worse than SPIDER.
This is because in the presence of repeated values, the SPIDER algorithm simply records duplicates and continues.
Additional disk space is used for these values, but there is minimal impact on the runtime.
In the case of the DeMarchi algorithm, any duplicate value means we need to retrieve and update the previously stored set of paths for this value.
A sample of the Reddit data suggests the expansion factor is approximately 26$\times$, resulting in significant repeated values which will incur this overhead.






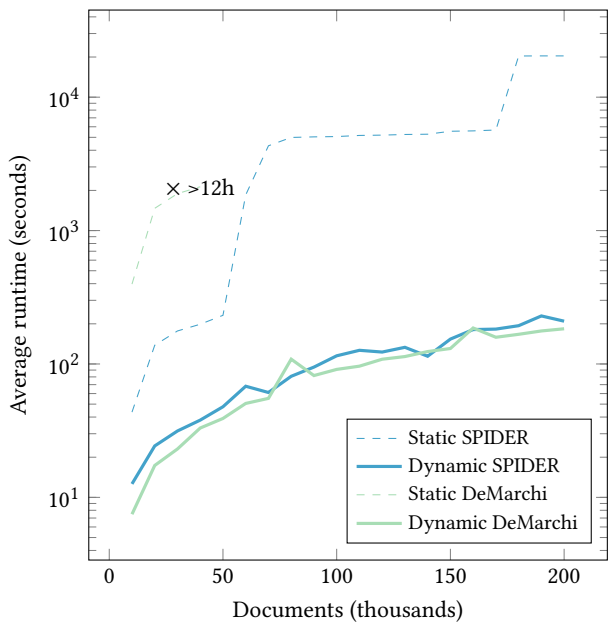
\begin{figure}[ht]
\begin{tikzpicture}
  \begin{semilogyaxis}[
    legend style={font=\small},
    legend cell align={left},
    legend pos=south east,
    xlabel={Documents (thousands)},
    ylabel={Average runtime (seconds)},
    width=\linewidth,
    height=1.05\linewidth,
  ]
    \addplot[darkblue,dashed] table {scale_spider_static.dat};
    \addplot[darkblue,very thick] table {scale_spider_dynamic.dat};
    \addplot[darkgreen,dashed] table {scale_demarchi_static.dat};
    \addplot[darkgreen,very thick] table {scale_demarchi_dynamic.dat};
    \legend{Static SPIDER,Dynamic SPIDER, Static DeMarchi, Dynamic DeMarchi}
    \node at (axis cs:40,2130) {$\times$ >12h};
  \end{semilogyaxis}
\end{tikzpicture}
\caption{Scalability of static vs. dynamic unrolling for inclusion dependency mining}\label{fig:ind_scalability}
\end{figure}

\subsubsection{Examples}

Below we include examples of nested inclusion dependencies which involve nested data and thus have no direct equivalent in the classical definition of inclusion dependencies.

\smallskip
$\texttt{\$.printings[*]}\subset\texttt{\$.set}$
\smallskip

This dependency is on the Magic: The Gathering dataset and indicates that elements of the array at the \texttt{printings} key hold values which also appear as values in the \texttt{set} key.
This is logical since printings refer to all the sets a card was printed in.
We further note that this dependency has a strength of $\sim$99.5\% since there exists a printing which was not the primary set of any card.
This demonstrates the utility of our approach to approximation since the dependency expresses a meaningful semantic relationship which does not hold on the entire dataset.

\smallskip
$\texttt{\$.payload.release.assets[*].uploader.url}$
\\$\subset\texttt{\$.payload.release.author.url}$
\smallskip

On the GitHub dataset, this expressed the dependency that all uploaders of assets for a release are also the authors of the release.
This is expected since GitHub's release API has the release author upload all assets.

\subsection{Functional Dependencies}

Unfortunately, our current approaches to nested functional dependency mining do not scale as well our inclusion dependency mining algorithms.
We were not able to mine a complete set of dependencies from our original datasets in our prescribed time limit.
However, we do demonstrate in this section, that dynamic unrolling still has a significant advantage over static unrolling.
We note that in this case, using static unrolling does \emph{not} produce the same set of dependencies.
This is because the expansion of rows in static unrolling loses the association between values in the original rows.
In the case where there is no nesting however, both static and dynamic unrolling will produce the same results.

Although there is no correct implementation of nested functional dependency mining with static unrolling, we use an implementation of TANE and FDep with static unrolling to demonstrate the savings from processing a smaller dataset.
Since none of the algorithms was unable to complete on our first three datasets, we instead proceed with a smaller version of our scalability experiments with the Reddit dataset.
These results are shown in Figure~\ref{fig:fd_scalability}.

Neither of the algorithms were able to complete on even the smallest dataset of 25 documents with static unrolling.
This is due to the number of expanded records and the poor scalability of our algorithms.
The FDep algorithm ran out of memory on our dataset of 50 documents after running for nearly 40 minutes.
The dynamic version of the TANE algorithm was the only one able to complete successfully on larger collections of documents.
Even so, we only evaluated the performance up to 500 documents which took over 3 hours.
Clearly there is significant work to do in improving the scalability of any of these algorithms.
In the case of TANE which performs the best, the construction of the initial set of bitmaps is still $\mathcal{O}(n^2)$ where $n$ is the number of documents.
We leave the adaptation of other functional dependency mining algorithms and the development of new algorithms as future work.

Although these algorithms scale poorly with an increased number of documents, one possible approach is to use sampling techniques to produce dependencies which are likely to hold.
When disabling any approximation, each algorithm never produces false negatives, but may produce false positives.
That is, it is possible a document which violates a dependency was not present in the sample and the dependency is erroneously reported as valid.
However, all valid dependencies will always be discovered on any sample.
This means we can take several samples of data and find the intersection of all discovered dependencies as an alternative approximation.
While this observation may be of some practical use, further analysis is necessary to provide meaningful estimates on the likelihood each dependency is valid when mined according to this approach.



\begin{figure}
\begin{tikzpicture}
  \begin{semilogyaxis}[
    legend style={font=\small},
    legend cell align={left},
    legend pos=south east,
    xlabel={Documents},
    ylabel={Average runtime (seconds)},
  ]
    \addplot[darkblue,very thick] table {scale_tane_dynamic.dat};
    \addplot[darkgreen,very thick] table {scale_fdep_dynamic.dat};
    \legend{Dynamic TANE, Dynamic FDep}
    \node at (axis cs:50,2600) {OOM};
  \end{semilogyaxis}
\end{tikzpicture}
\caption{Scalability of dynamic unrolling for functional dependency mining}\label{fig:fd_scalability}
\end{figure}
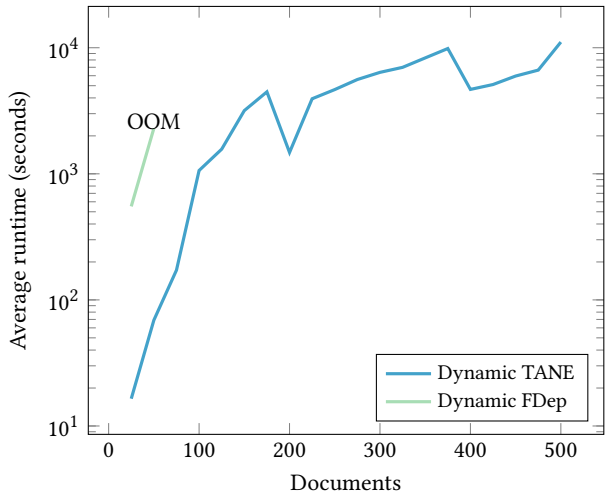

\subsubsection{Examples}

Although our mining algorithms were not able to complete on the entirety of our datasets, we include here some meaningful examples mined from a subset of the data.

\smallskip
$\texttt{\$.colors[*]}\rightarrow\texttt{\$.colorIdentity[*]}$
\smallskip

This dependency holds on our dataset of Magic: The Gathering cards.
It states that if two cards share common colors, they must share common color identities.
According to the definition of color identity, the color of the card is always included in the color identity, so the dependency expresses a meaningful relationship between these two arrays.

\smallskip
$\texttt{\$.brand}\rightarrow\texttt{\$.categories[*][*]}$
\smallskip

The dependency above represents a relationship between brands and product categories on our amazon dataset.
It indicates that whenever two products have the same brand, they should also have some category in common.
While this dependency may not hold across all products, it does represent meaningful semantic information which cannot be expressed with traditional functional dependencies.

\section{Related Work}~\label{sec:related_work}

We split our discussion of related work into two sections.
Section~\ref{subsec:rel_nested} discusses existing definitions of nested dependencies and Section~\ref{subsec:rel_mining} compares our approach with existing algorithms for mining unnested dependencies.
To our knowledge, there are no existing algorithms for mining functional and inclusion dependencies on nested data.


\subsection{Nested Dependencies}\label{subsec:rel_nested}

Hara and Davidson~\cite{Hara1999} also defined a class of nested functional dependencies (NFDs) but do not define any algorithms for mining these dependencies.
These NFDs are functional dependencies defined on a nested relational model.
Their definition differs from ours in one key ways.
Dependencies can be ``local'' to a sub-element of the model (the equivalent of a nested object in our JSON model).
In contrast, our NFDs always involve all values in a document which exist at a particular path.
This locality could be a useful future extension.

Fischer et al.~\cite{Fischer1985} also examine dependencies in the nested relational model.
Their analysis of functional dependencies explores a similar locality property to those proposed by Hara and Davidson.
They also provide analysis of multi-valued dependencies which may be interesting class of dependencies to explore in future work.

\subsection{Dependency Mining}\label{subsec:rel_mining}

Although there has been some previous work in defining dependencies on nested data models, we are not aware of existing work in algorithms to mine dependencies on nested data.
Instead, this section focuses on existing algorithms for non-nested dependency mining.

Papenbrock et al.~\cite{Papenbrock2015-3} analyze seven functional dependency mining algorithms and classified them into three categories: (1) lattice traversal, (2) difference- and agree-set and (3) dependency induction.
The first one comprised of TANE~\cite{Huhtala1999}, FUN, FD\_Mine~\cite{Yao2002}, and DFD~\cite{Abedjan2014}.
The second group is made up of Dep-Miner and FastFDs~\cite{Wyss2001}.
The last category includes FDep~\cite{Flach1999}.
To compare them, the authors evaluated their performance in terms of scalability as the number of rows and columns increase.
The most scalable algorithm varied primarily with the number of columns with TANE and FDep performing the best.
This informed our choice of algorithms to adapt.

As we noted previously, the original implementation of TANE also included a mechanism for approximate mining of functional dependencies which inspired our approach.
De and Kambhampati~\cite{De2010} present a set of algorithms to mine functional dependencies in probabilistic databases.
Our work does not deal with probabilistic databases, although some of the mining techniques proposed may also apply to probabilistic databases with non-relational data models. Note that as far as we are aware, such databases do not yet exist.

Similarly to the study on functional dependencies above, D{\"u}rsch et al.~\cite{Dursch2019} evaluated thirteen algorithms for mining inclusion dependencies.
Our current work focuses on unary inclusion dependency discovery although we would consider n-ary dependency discovery as useful future work.
D{\"u}rsch et al. examine nine algorithms which attempt to mine unary inclusion dependencies.
SPIDER~\cite{Bauckmann2007} which we adapt in this work shows poor scalability with respect to the number of rows.
However, it scales well with the number of columns and (in our experience) is simple to implement.
In the future, we would like to explore S-INDD and S-INDD++ which present improvements over SPIDER.
Furthermore, the performance analysis by D{\"u}rsch et al. also shows that the mining algorithm presented by De Marchi et al.~\cite{DeMarchi2002} exhibits the best scalability relative to the number of rows and columns. 

Kruse et al.~\cite{Kruse2017} proposed \textsc{Faida} which uses approximation techniques to efficiently mine dependencies while potentially producing false positives.
It is also the only inclusion dependency mining algorithm we are aware of which supports approximation.
However, our technique for mining approximate inclusion dependency mining differs in that dependencies which do not hold are a feature since they represent a well-defined possibility of errors in the input data.
However, given the significant performance improvements seen by \textsc{Faida}, a similar approach may warrant future work.

\section{Conclusions and Future Work}

We presented an extension of inclusion dependencies that is useful for identifying relationships in nested JSON data.
We also developed an algorithm capable of mining this class of dependencies on a collection of JSON documents.
Our evaluation of this algorithm on a variety of datasets showed its effectiveness over existing approaches.

While this algorithm has proved effective, there are several limitations to be addressed in future work.
Firstly, we assume that all documents in the collection have uniform schema.
Also, we currently only consider unary inclusion dependencies.
Adapting algorithms such as BINDER~\cite{Papenbrock2015} could enable extension to n-ary inclusion dependencies.

Furthermore, we hope to make use of this algorithm to identify new relationships between large open data sets.
Our current implementation requires any the dataset being examined to fit in memory. 
This is not an inherent restriction to our techniques and we expect this limitation to be lifted in the future.
The ability to perform mining on large datasets is important as we wish to be able to mine dependencies across many collections of data to help determine relationships between these collections.

In previous work, we proposed a three-step process to renormalize a non-relational database~\cite{Mior2018}.
This work required a flattened relational schema to be used as input and dependencies were mined using the equivalent of static unrolling.
We expect this work to be useful in the context of non-relational database renormalization to replace the need to use a relational model as algorithm input.

\bibliographystyle{abbrv}
\bibliography{main}

\begin{appendix}

\section{Alloy model for NFDs}\label{sec:alloy}

Below is a copy of the Alloy model used while developing proofs for the axioms on nested functional dependencies in Section~\ref{subsec:ind_rules}.
We note that this model is incomplete since it does not fully capture the association between keys and values in functional dependencies.
However, we did find this model useful as an aid during proof construction.
It is also useful for constructing particular examples of documents or dependency structures.
We simply ask the Alloy analyzer to assert that the desired properties \emph{never} hold and the counterexample produced is an example which meets our desired properties.

\begin{lstlisting}[language=alloy]
sig Key {} sig Value {}
sig ValueSet { values: set Value }
sig Document {	 data: Key one-> one ValueSet }

// There are some documents with data
fact { #Document > 0 }
fact { all d: Document | #d.data > 0 }

// Keys have at least one value, some more than one
fact { all vs: ValueSet | #vs.values > 0 }
fact { some vs: ValueSet | #vs.values > 1 }

// Value sets are unique
fact { all vs1, vs2: ValueSet |
  vs1 = vs2 || vs1.values != vs2.values }

// All keys exist in some document
fact { all k: one Key | all d: one Document |
  k in d.data.univ }

// All value sets are in some document
// and all values are in some value set
fact { all vs: one ValueSet | some d: one Document |
  vs in d.data[univ] }
fact { all v: one Value | some vs: one ValueSet |
  v in vs.values }

// There are some documents which overlap
// and some which do not overlap
fact { some d1, d2: one Document | all k: one Key |
  d1 != d2 && some d1.data[k].values & 
  d2.data[k].values }
fact { some d1, d2: one Document | all k: one Key |
  d1 != d2 && no d1.data[k].values & 
  d2.data[k].values }

// Get all values for a set of keys
fun values [keys: set Key, d: one Document]:
  set ValueSet { (keys <: d.data)[univ] }

// Some documents must overlap on a set of keys
pred overlap[keys: set Key] {
  some d1, d2: one Document
  // two distinct documents
  | d1 != d2
  // must overlap on all keys in the set
    && some (keys <: d1.data & keys <: d2.data)
}

// A functional dependency must hold
pred fd [lhs: set Key, rhs: set Key] {
  all d1, d2: one Document
  // either the LHS does not overlap
  | (some k: Key | k in lhs && no values[k, d1].values
     & values[k, d2].values)
  // or the RHS does
    || (all k: Key | k not in rhs ||
        some values[k, d1].values
        & values[k, d2].values)
}

check transitivity { all k1, k2, k3: set Key |
  fd[k1, k2] && fd[k2, k3] => fd[k1, k3] }
check reflexivity { all k1, k2: set Key |
  (#k1 > 0 && k1 in k2) => fd[k2, k1] }
check augmentation { all k1, k2, k3: set Key |
  fd[k1, k2] => fd[k1 + k3, k2 + k3] }

\end{lstlisting}
\end{appendix}

\end{document}